\definecolor{mygray}{gray}{.9}
\newcommand*{\rom}[1]{\expandafter\@slowromancap\romannumeral #1@}
\begin{document}

\theoremstyle{definition}
\newtheorem{assumption}{Assumption}
\newtheorem{theorem}{Theorem}
\newtheorem{lemma}{Lemma}
\newtheorem{example}{Example}
\newtheorem{definition}{Definition}
\newtheorem{corollary}{Corollary}
\newtheorem{prop}{Proposition}

\def\letas{\mathrel{\mathop{=}\limits^{\triangle}}}
\def\ind{\begin{picture}(9,8)
         \put(0,0){\line(1,0){9}}
         \put(3,0){\line(0,1){8}}
         \put(6,0){\line(0,1){8}}
         \end{picture}
        }
\def\nind{\begin{picture}(9,8)
         \put(0,0){\line(1,0){9}}
         \put(3,0){\line(0,1){8}}
         \put(6,0){\line(0,1){8}}
         \put(1,0){{\it /}}
         \end{picture}
    }

\def\var{\text{var}}
\def\cov{\text{cov}}
\def\sumN{\sum_{i=1}^N}
\def\summ{\sum\limits_{j=1}^m}
\def\convergeas{\stackrel{a.s.}{\longrightarrow}}
\def\converged{\stackrel{d}{\longrightarrow}}
\def\iidsim{\stackrel{i.i.d.}{\sim}}
\def\indsim{\stackrel{ind}{\sim}}
\def\asim{\stackrel{a}{\sim}}
\def\d{\text{d}}
\def\obs{\text{obs}}
\def\CACE{\textsc{CACE}}
\def\SP{\textsc{SP}}
\def\RD{\textsc{RD}}
\def\RR{\textsc{RR}}
\def\OR{\textsc{OR}}
\def\logit{\text{logit}}
\def\PC{\textsc{PC}}
\def\lgamma{\text{lgamma}}
\def\digamma{\text{digamma}}
\def\trigamma{\text{trigamma}}

\newcommand\luke[1]{
{\textcolor{red}{{\sc Luke}: {\em #1}}}
}

\setlength{\baselineskip}{1.5\baselineskip}

\title{\bf Model-free causal inference of binary experimental data}
\author{Peng Ding\footnote{Department of Statistics, University of California, Berkeley. Address for correspondence: 425 Evans Hall, Berkeley, California, 94720, USA. Email: \url{pengdingpku@berkeley.edu}.}
~and Luke W. Miratrix\footnote{Graduate School of Education and Department of Statistics, Harvard University. 
}
 }
\date{}
\maketitle

\begin{abstract}

For binary experimental data, we discuss randomization-based inferential procedures that do not need to invoke any modeling assumptions. We also introduce methods for likelihood and Bayesian inference based solely on the physical randomization without any hypothetical super population assumptions about the potential outcomes. These estimators have some properties superior to moment-based ones such as only giving estimates in regions of feasible support. Due to the lack of identification of the causal model, we also propose a sensitivity analysis approach which allows for the characterization of the impact of the association between the potential outcomes on statistical inference. 

\bigskip 
\noindent {\bfseries Keywords}: Attributable effect; Average causal effect; Bayesian inference; Completely randomized experiment; Likelihood; Sensitivity analysis.
\end{abstract}

\section{Introduction}

In randomized experiments, the outcome of interest is often binary, in which case the resulting data can be summarized by a $2\times 2$ table. Testing for significant relationships in $2\times 2$ tables has a long history in statistics. \citet{yates1984tests} provided a comprehensive review of this topic, to which Sir David Cox commented, ``discussion of tests for the $2\times 2$ tables can be described as a saga, a story with deep implications.''

In this paper, we give an in-depth discussion of estimating causal effects for those $2\times 2$ tables generated by completely randomized experiments.
Under the potential outcomes framework \citep{neyman::1923, rubin::1974}, each unit has pretreatment potential outcomes corresponding to the potential treatments that unit could receive. 
Finite population causal inference \citep[cf.][]{rosenbaum::2002, imbens::2015book} focuses on the experimental units at hand, and treats all potential outcomes as fixed with the randomization of treatment assignment as the only source of randomness. 
This view allows for weak modeling assumptions and inferential methods that are valid due to the randomization mechanism itself rather than any stated belief in a data generating process. 
Furthermore, by focusing on the finite population, the precision of the usual difference-in-means estimator is greater than those of comparable infinite population models. Unfortunately, the uncertainty of the estimator depends on the association between the potential outcomes, an unidentifiable quantity that can complicate finite population inference \citep{neyman::1923, imbens::2015book}.

Binary outcomes, however, lend enough structure to the problem that these issues can be somewhat circumvented. Because of the discrete nature of the problem, there are only a small number of possible types of units that could exist, which allows for two things. First, we can achieve sharper bounds on the variance of the difference-in-means estimator. Second, we can actually implement model-free likelihood and Bayesian procedures for the usual treatment effects. These estimators have superior performance to the usual moment estimators because they exploit the structure of the problem in order to limit possible estimates to a restricted parameter space.
In particular, the observed data assign zero likelihood outside a well-defined region of possibilities and so such procedures will not return any of these impossible estimates.  Moment estimators, on the other hand, could return such values.

It is well known that the association between the potential outcomes plays an important role in estimating the average causal effect. Different approaches have been used to address this difficulty. Some restrict attention to testing the sharp null hypothesis of zero causal effect for all experimental units \citep{fisher::1935, copas::1973}. 
Some enumerate all possible combinations of the potential outcomes in order to construct exact confidence intervals \citep{rigdon2015randomization, li2015exact}. Some derive bounds on the variances of the estimators over all possible randomizations using the marginal distributions \citep{robins::1988, aronow::2014, ding::2015, fogarty2016discrete}. Some assume non-negative individual causal effects, allowing causal effects to be estimated directly \citep{rosenbaum::2001}, or use structures such as constant shifts \citep{rosenbaum::2002} or dilations to dictate all the individual outcomes \citep{rosenbaum1999reduced}. Recent work on Bayesian inference imputes missing potential outcomes based on their posterior predictive distributions, which requires modeling the potential outcomes as Binomial samples from a hypothetical infinite population \citep{ding::2015}.

The methods we present in this paper are distinct from these; as a ``reasoned basis'' \citep{fisher::1935}, the randomization itself allows for obtaining a likelihood function without any external modeling assumptions. 
To introduce this concept, we begin in Section \ref{sec::mono} (after setting up notation and background in Section \ref{sec::notation}) with the simple case with monotonicity, i.e., no units are negatively affected by treatment. 
Under monotonicity all causal parameters are identifiable, making this process more easily understood. 
We then relax monotonicity in Section \ref{sec::no-mono} to allow for a sensitivity analysis for the variance of the difference-in-means estimator as well as for our new likelihood and Bayesian inference. 
We finally extend these approaches to the attributable effect \citep{rosenbaum::2001} in Section \ref{sec::attributable}, showing that inference of the attributable effect does not depend on the association between potential outcomes. 
We use a real example to illustrate the theory and methods in Section \ref{sec::illustration} and give some concluding remarks in Section \ref{sec::discussion}.  All proofs have been relegated to the Appendix.

\section{Potential Outcomes, Causal Estimands, and Observed Data}
\label{sec::notation}

Consider an experiment with $N$ units, a binary treatment $W$, and a binary outcome $Y$. 
Under the Stable Unit Treatment Value Assumption \citep{rubin::1980}, we 
define $Y_i(w)$ as the potential outcome of unit $i$ under treatment $w$, with $w=1$ for treatment and $w=0$ for control, respectively. Therefore, the potential outcomes forms a $N\times 2$ matrix $\{ (Y_i(1), Y_i(0)) \}_{i=1}^N$, which is sometimes referred to as the ``Science'' \citep{rubin::2005}. With a binary outcome, there are only four types of individuals possible, defined by the pair $(Y_i(1),  Y_i(0) )$ of potential outcomes.  
In particular, if we imagine $Y$ being a binary outcome of survival status,  $(Y_i(1), Y_i(0)) = (1,1)$ would be those who always survived, $(Y_i(1), Y_i(0)) = (0,0)$ would never survive regardless of treatment, and so forth.  The treatment has a positive impact for those with $(Y_i(1), Y_i(0)) = (1,0)$ and a negative impact for those with $(Y_i(1), Y_i(0)) = (0,1)$. 
Because there are only four types of units, the full $N \times 2$ Science Table can be summarized by a $2\times 2$ table formed by the cell counts $N_{jk}=\#\{ i: Y_i(1)=j,Y_i(0)=k \}$ for $j$ and $k=0,1$. This summary Science Table (See Table~\ref{tb::science}) contains all the information about the causal relationship between the treatment and outcome.

Causal effects are defined as comparisons between the potential outcomes. On the difference scale $\tau_i = Y_i(1) - Y_i(0)$ is the individual-level causal effect for unit $i.$ Define $p_w=\sumN Y_i(w)/N$ as the proportion of the potential outcome $Y_i(w)$ being one. Then the average causal effect is defined as
$$
\tau   = \frac{1}{N}\sumN \tau_i = p_1-p_0 = \frac{N_{10} - N_{01}}{N} .
$$
We focus on $\tau$. It is conceptually straightforward to extend our discussion to other causal measures \citep{robins::1988, ding::2015}.

Consider a completely randomized experiment with $N_1$ units receiving treatment and  $N_0$ control. The observed outcomes are deterministic functions of the treatment assignment and potential outcomes, i.e., $Y_i^\obs = W_i Y_i(1) + (1-W_i) Y_i(0)$. Because both the treatment assignments and observed outcomes are binary, there are four observed types of the units classified by $(W_i, Y_i^\obs)$, which gives a different $2\times 2$ table formed by the cell counts $n_{wy}^\obs = \#\{ i: W_i=w, Y_i^\obs = y  \}$ for $w=0,1$ and $y=0,1$. See Table \ref{tb::obs}. This table is distinct from the unknown Science Table \ref{tb::science}.

\begin{table}[t]
\parbox{.45\linewidth}{
\centering
\caption{The summarized Science Table}\label{tb::science}
\begin{tabular}{|c|cc|c|}
\hline
  &  $Y(1)=1$  & $Y(1)=0$ & row sum  \\
\hline
$Y(0)=1$ & $N_{11}$  & $N_{01} $  & $S = N_{11}+N_{01}$   \\
$Y(0)=0$ & $N_{10}$   & $N_{00} $ & $N-S$  \\
\hline
\end{tabular}
}
\hfill
\parbox{.45\linewidth}{
\centering
\caption{The observed Data}\label{tb::obs}
\begin{tabular}{|c|cc|c|}
\hline
  &  $Y^{\obs}=1$  & $Y^{\obs}=0$ & row sum  \\
\hline
$W=1$ & $n_{11}^\obs $  & $n_{10}^\obs $ & $N_1$\\
$W=0$ & $n_{01}^\obs $  & $n_{00}^\obs $ & $N_0$\\
\hline
\end{tabular}
}
\end{table}

Importantly, the potential outcomes, the cell counts $N_{jk}$'s, and the causal estimand $\tau  $ are all fixed. 
The observed cell counts $n_{wy}^\obs$'s, however, are random, but the randomness comes solely from the physical randomization of the treatment assignment.

Classic approaches use the physical randomization to justify exact tests for sharp null hypotheses that fully specify the associated Science Table \citep{fisher::1935, copas::1973, rosenbaum::2002, imbens::2015book}. The sharp null formulation can be further utilized to construct exact confidence intervals for causal effects by inverting randomization tests \citep{rosenbaum::2001,  rigdon2015randomization, li2015exact}. 
We instead evaluated the repeated sampling properties of the estimators of causal effects, and then derived likelihood-based and Bayesian inference without imposing any modeling assumptions whatsoever.

\section{Inference Under Monotonicity}
\label{sec::mono}

We first discuss an important simplifying case where the potential outcomes satisfy monotonicity:
\begin{assumption}
\label{assume::mono} (Monotonicity)
$Y_i(1) \geq Y_i(0)$ for each unit $i$.
\end{assumption}

Monotonicity means that treatment is not harmful to any unit, which rules out the existence of potentially harmed units with $(Y_i(1), Y_i(0)) = (0,1)$, making $N_{01}=0$. 
The case with $Y_i(1) \leq Y_i(0)$ for all $i$ is analogous.
Monotonicity is not refutable based on the observed data as long as the treatment is not harmful to the outcome on average. 
Monotonicity is a strong assumption: it imposes a maximal correlation between the potential outcomes $Y(1) $ and $ Y(0)$, and guarantees the identifiability of all the cell counts $N_{jk}$'s, as described by Proposition~\ref{thm::identi-mono}:

\begin{prop}
\label{thm::identi-mono}
Under monotonicity, $N_{01}=0$ and we can identify (i.e., express parameters as expectations of observed data) the $N_{jk}$'s by
\[
N_{11} = E\left(   \frac{N}{N_0} n_{01}^\obs   \right),\quad
N_{00} = E\left(   \frac{N}{N_1} n_{10}^\obs   \right),\quad 
N_{10} = E\left(   N - \frac{N}{N_0} n_{01}^\obs  - \frac{N}{N_1} n_{10}^\obs  \right).
\]
\end{prop}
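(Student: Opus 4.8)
The plan is to compute each expectation directly from the complete-randomization distribution of the treatment assignment, since this is the only source of randomness. The first step is to rewrite each observed cell count as a sum of unit-level indicators. Because the observed outcome satisfies $Y_i^\obs = W_i Y_i(1) + (1-W_i) Y_i(0)$, a control unit reveals $Y_i(0)$ and a treated unit reveals $Y_i(1)$, so that $I(W_i=0)I(Y_i^\obs=1) = I(W_i=0)I(Y_i(0)=1)$ and similarly for the treated cell. Hence I would write
\[
n_{01}^\obs = \sumN I(W_i = 0)\, I(Y_i(0) = 1), \qquad n_{10}^\obs = \sumN I(W_i = 1)\, I(Y_i(1) = 0),
\]
where the potential outcomes are fixed constants and only the $W_i$ are random.

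Next I would take expectations term by term. Under complete randomization each unit is marginally assigned to control with probability $P(W_i = 0) = N_0/N$ and to treatment with probability $P(W_i = 1) = N_1/N$. Since the potential outcomes are fixed, this gives $E(n_{01}^\obs) = (N_0/N) \sumN I(Y_i(0) = 1)$ and $E(n_{10}^\obs) = (N_1/N)\sumN I(Y_i(1) = 0)$. The key counting step is to recognize, reading the relevant margins off the Science Table, that $\sumN I(Y_i(0) = 1) = N_{11} + N_{01}$ and $\sumN I(Y_i(1) = 0) = N_{00} + N_{01}$. Invoking monotonicity (Assumption~\ref{assume::mono}), which forces $N_{01} = 0$, collapses these to $N_{11}$ and $N_{00}$ respectively. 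Rearranging and absorbing the constants $N, N_0, N_1$ inside the expectation then yields the first two identities.

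Finally, the formula for $N_{10}$ follows from the accounting constraint $N_{11} + N_{01} + N_{10} + N_{00} = N$. With $N_{01} = 0$ this reads $N_{10} = N - N_{11} - N_{00}$, and substituting the two expressions just derived and using linearity of expectation gives the claimed form. There is no serious obstacle here: the argument is a short first-moment calculation, and the only point requiring care is the bookkeeping that matches the events $\{Y_i(0) = 1\}$ and $\{Y_i(1) = 0\}$ to the correct Science-Table margins before monotonicity is used to eliminate $N_{01}$.
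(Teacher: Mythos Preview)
Your proposal is correct and follows essentially the same approach as the paper: both rewrite $n_{01}^\obs$ and $n_{10}^\obs$ as sums of $(1-W_i)Y_i(0)$ and $W_i\{1-Y_i(1)\}$, take expectations using the marginal assignment probabilities $N_0/N$ and $N_1/N$, and then deduce the third identity from the total-count constraint. Your version is slightly more explicit about the step where the Science-Table margins $N_{11}+N_{01}$ and $N_{00}+N_{01}$ collapse to $N_{11}$ and $N_{00}$ under monotonicity, which the paper leaves implicit.
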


Proposition \ref{thm::identi-mono} immediately results in unbiased moment estimators for the $N_{jk}$'s made by plugging in sample moments.  In particular, $\widehat{N}_{10} = N - (N/N_0) n_{01}^{\obs} -  (N/N_1) n_{10}^{\obs}$ and
$$
\widehat{\tau}   = \frac{\widehat{N}_{10}}{N} = 1 - \frac{n_{01}^\obs  }{N_0}   - \frac{ n_{10}^\obs }{N_1}  =   \frac{ n_{11}^\obs }{N_1}  - \frac{n_{01}^\obs  }{N_0} \equiv \widehat{p}_1 - \widehat{p}_0,
$$
where $\widehat{p}_1$ and $\widehat{p}_0$ are the observed proportions of the outcomes being one under treatment and control, respectively. The mean and variance of $\widehat{\tau}  $ then follow by extending \citet{neyman::1923}'s result. 
Monotonicity allows for estimation of the correlation of potential outcomes, giving:

\begin{prop}
\label{thm::moments-mono}
The randomization distribution of $\widehat{\tau}  $ has mean $\tau  $ and variance
\begin{eqnarray}
\var(\widehat{\tau}  ) =  \frac{N}{N-1} \left\{   \frac{ p_1(1-p_1) }{ N_1 }  
+ \frac{  p_0(1-p_0) }{ N_0 }    -   \frac{ \tau  (1-\tau  ) }{ N}   \right\} . \label{eq::var-mono}
\end{eqnarray}
The variance can be estimated by plugging in: 
\begin{eqnarray}
\widehat{V} =   \frac{N}{N-1}  \left\{  \frac{  \widehat{p}_1(1-\widehat{p}_1) }{ N_1 }  
+  \frac{   \widehat{p}_0(1-\widehat{p}_0) }{  N_0 }  -  \frac{ \widehat{\tau}  (1-\widehat{\tau}  ) } { N}  \right\} . 
\label{eq::var-esti-mono}
\end{eqnarray}
Furthermore, $ \left( \widehat{\tau}   - \tau   \right) /  \widehat{V} ^{1/2} \rightarrow \mathcal{N}( 0, 1 )$ in distribution.
\end{prop}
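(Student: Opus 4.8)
The plan is to build on \citet{neyman::1923}'s classical analysis of the completely randomized experiment and specialize his general variance formula to the binary, monotone case. Complete randomization makes the treated units a simple random sample of size $N_1$ and the control units a simple random sample of size $N_0$ from the finite population, so $\widehat{p}_1$ and $\widehat{p}_0$ are unbiased for $p_1$ and $p_0$; by linearity $E(\widehat{\tau}) = p_1 - p_0 = \tau$. For the variance I would invoke Neyman's decomposition
\[
\var(\widehat{\tau}) = \frac{S_1^2}{N_1} + \frac{S_0^2}{N_0} - \frac{S_\tau^2}{N},
\]
where $S_w^2 = (N-1)^{-1}\sumN \{ Y_i(w) - p_w \}^2$ and $S_\tau^2 = (N-1)^{-1}\sumN ( \tau_i - \tau )^2$ are the finite-population variances of the potential outcomes and of the individual effects $\tau_i$.

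The binary structure is exactly what renders each term explicit. Because $Y_i(w)^2 = Y_i(w)$, a direct count gives $S_w^2 = \{ N/(N-1) \} p_w(1-p_w)$. Monotonicity enters through the last term: under Assumption~\ref{assume::mono} every $\tau_i = Y_i(1)-Y_i(0)$ lies in $\{0,1\}$ rather than $\{-1,0,1\}$, so $\tau_i$ is an indicator with mean $\tau$ and $\tau_i^2 = \tau_i$, giving $S_\tau^2 = \{ N/(N-1) \} \tau(1-\tau)$. Substituting the three expressions into the decomposition yields \eqref{eq::var-mono}. The point worth emphasizing is that without monotonicity $S_\tau^2$ would depend on $N_{01}$, and hence on the unidentifiable association between the potential outcomes; monotonicity collapses $S_\tau^2$ to a function of the identified estimand $\tau$ alone, which is what makes the variance both computable and estimable here.

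For the variance estimator I would argue consistency rather than exact unbiasedness. The plug-in quantities $\widehat{p}_1(1-\widehat{p}_1)$, $\widehat{p}_0(1-\widehat{p}_0)$, and $\widehat{\tau}(1-\widehat{\tau})$ track their population counterparts up to multiplicative factors of the form $(N_w-1)/N_w \to 1$; together with $\widehat{p}_w \to p_w$ and $\widehat{\tau} \to \tau$ in probability (immediate from the first two moments just computed and a Chebyshev argument), this yields $\widehat{V} / \var(\widehat{\tau}) \to 1$ in probability.

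Finally, for the limiting distribution I would appeal to the finite-population central limit theorem for the difference of two sample means in a completely randomized experiment, giving $(\widehat{\tau} - \tau)/\{\var(\widehat{\tau})\}^{1/2} \converged \mathcal{N}(0,1)$, and then replace $\var(\widehat{\tau})$ by the consistent $\widehat{V}$ via Slutsky's theorem. I expect this central limit theorem to be the only real obstacle. Its hypotheses are mild in the present setting: binary outcomes are bounded, so the Lindeberg-type negligibility condition holds automatically, and the only genuine requirements are that the sampling fraction $N_1/N$ stay bounded away from $0$ and $1$ and that the variance in \eqref{eq::var-mono} not degenerate to zero (ruling out boundary cases where an outcome is essentially constant). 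I would therefore state these as the operating regularity conditions along a sequence of growing finite populations rather than re-derive the finite-population CLT in full.
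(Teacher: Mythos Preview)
Your proposal is correct and matches the paper's own proof essentially step for step: invoke Neyman's decomposition $\var(\widehat{\tau}) = S_1^2/N_1 + S_0^2/N_0 - S_\tau^2/N$, evaluate each finite-population variance explicitly using $Y_i(w)^2=Y_i(w)$ and (under monotonicity) $\tau_i^2=\tau_i$, then appeal to the finite-population CLT \citep{hajek::1960} for the limit law. The paper is terser about the plug-in variance estimator and the regularity conditions, but your added detail there is sound and does not deviate from the intended argument.
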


Unlike the classic \citet{neyman::1923} variance expression, all terms in expression \eqref{eq::var-mono} are identifiable. Although a moment estimator with an explicit form can be useful to illustrate sources of information, it might not make full use of the information and can sometimes give estimates outside of the parameter space. An alternative approach is to utilize likelihood and Bayesian inference for the parameters of interest, which restricts our attention to only those values that are possible. Now because $\{ (Y_i(1), Y_i(0))\}_{i=1}^N$ are fixed numbers, we cannot write down the likelihood function based on the usual Binomial models. We can, however, write it down according to an urn model induced by the completely randomized experiment. In particular, view the finite population as a fixed urn containing three types of balls corresponding to the three types of units defined by $(Y(1),Y(0)) = (1,1), (1,0),$ and $(0,0)$. We have $N_{11}$ balls of type $(1,1)$, $N_{10}$ balls of type $(1,0)$, and $N - N_{11} - N_{10}$ balls of type $(0,0)$. We can thus parametrize the population with only $N_{11}$ and $N_{10}$. A completely randomized experiment is then equivalent to drawing $N_1$ balls from this urn to form the treatment arm, and using the remaining $N_0$ balls to form the control arm. This allows for writing down the likelihood based on the observed data as a multivariate Hypergeometric distribution below.

\begin{theorem}
\label{thm::like-mono}
Under monotonicity, the likelihood function of $(N_{10}, N_{11})$ is 
\begin{eqnarray*}
\binom{ N_{11} }{  N_{11} - n_{01}^\obs }
\binom{N_{10}}{ n_{11}^\obs + n_{01}^\obs - N_{11} }  
\binom{ N - N_{10} - N_{11}   }{ n_{10}^\obs  }
\Big / 
\binom{N}{N_1}
,
\end{eqnarray*}  
for any $\left( N_{10}, N_{11} \right)$ in the region
\begin{eqnarray}
\label{eq::region}
 \left\{ \left( N_{10}, N_{11} \right) : n_{01}^\obs \leq N_{11} \leq n_{11}^\obs + n_{01}^\obs \leq N_{10}+N_{11} \leq N - n_{10}^\obs \right\}.
\end{eqnarray} 
The likelihood is zero elsewhere. 
\end{theorem}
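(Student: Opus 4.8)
The plan is to read the likelihood directly off the urn model: under complete randomization all $\binom{N}{N_1}$ allocations of treatment are equally likely, so the probability of any observed table equals the number of allocations producing that table, divided by $\binom{N}{N_1}$. First I would record how each of the three unit types (no $(0,1)$ type exists, by monotonicity) contributes to the observed cells. A treated unit reveals $Y(1)$, so a treated unit with $Y^\obs=1$ is of type $(1,1)$ or $(1,0)$ while a treated unit with $Y^\obs=0$ must be type $(0,0)$; a control unit reveals $Y(0)$, so a control unit with $Y^\obs=1$ must be type $(1,1)$ and a control unit with $Y^\obs=0$ is type $(1,0)$ or $(0,0)$. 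Monotonicity is exactly what makes the two ``must be'' statements hold, collapsing the bookkeeping to three types.

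Next I would introduce the latent allocation. Let $a,b,c$ be the numbers of type-$(1,1)$, type-$(1,0)$, and type-$(0,0)$ units assigned to treatment, so that $a+b+c=N_1$. Expressing the observed cells gives $n_{11}^\obs=a+b$, $n_{10}^\obs=c$, $n_{01}^\obs=N_{11}-a$, and $n_{00}^\obs=(N_{10}-b)+(N_{00}-c)$. The key step is to solve these: the first three relations yield $a=N_{11}-n_{01}^\obs$, $c=n_{10}^\obs$, and $b=n_{11}^\obs+n_{01}^\obs-N_{11}$, so for a fixed Science Table the observed table corresponds to a \emph{unique} latent allocation $(a,b,c)$. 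Hence the number of allocations consistent with the data is the single product $\binom{N_{11}}{a}\binom{N_{10}}{b}\binom{N_{00}}{c}$ (choosing which units of each type go to treatment). Dividing by $\binom{N}{N_1}$, substituting $N_{00}=N-N_{10}-N_{11}$ and the solved values of $a,b,c$, and using $\binom{N_{11}}{N_{11}-n_{01}^\obs}=\binom{N_{11}}{a}$, produces exactly the stated likelihood.

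Finally I would obtain the support region \eqref{eq::region} as the feasibility conditions on this latent allocation, namely $0\le a\le N_{11}$, $0\le b\le N_{10}$, and $0\le c\le N_{00}$. Nonnegativity of $a$ gives $N_{11}\ge n_{01}^\obs$; nonnegativity of $b$ gives $N_{11}\le n_{11}^\obs+n_{01}^\obs$; the bound $b\le N_{10}$ gives $n_{11}^\obs+n_{01}^\obs\le N_{10}+N_{11}$; and $c\le N_{00}$ gives $N_{10}+N_{11}\le N-n_{10}^\obs$ (the remaining inequalities $a\le N_{11}$ and $c\ge 0$ are automatic since $n_{01}^\obs,n_{10}^\obs\ge 0$). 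Chaining these reproduces \eqref{eq::region}, and outside this region some binomial coefficient has a negative or oversized lower index, so the likelihood vanishes.

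I do not anticipate a genuine obstacle. The one step deserving care is verifying that, under monotonicity, the observed table pins down the latent allocation uniquely, so that the count is a single product of binomials rather than a sum; without monotonicity a fourth type would reintroduce a free index and this clean form would fail. The only other place an error could creep in is the index bookkeeping of the three binomial coefficients and the check that the four chained inequalities of \eqref{eq::region} are precisely equivalent to the feasibility constraints on $(a,b,c)$.
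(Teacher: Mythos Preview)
Your proposal is correct and follows essentially the same route as the paper: both identify the three unit types under monotonicity, show that the observed table determines the treatment-arm counts of each type uniquely (your $(a,b,c)$ are the paper's $(N_{11,1},N_{10,1},N_{00,1})$), and read off the multivariate Hypergeometric probability. Your derivation of the support region from the feasibility constraints $0\le a\le N_{11}$, $0\le b\le N_{10}$, $0\le c\le N_{00}$ is in fact slightly more explicit than the paper's, which simply asserts the chain of inequalities from nonnegativity.
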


There are several curious aspects and consequences to this theorem which we now discuss.
First, before obtaining data, the condition $N_{10}+N_{11}+N_{00}=N$ restricts $(N_{10}, N_{11})$ to take $(N+2)(N+1)/2$ possible values, and $\tau$ can take values $k/N$ for any integer $k\in [-N,N].$
After observing the data, $(N_{10}, N_{11})$ can take only $(n_{11}^\obs+1)(n_{00}^\obs+1)<(N+2)(N+1)/2$ possible values, and there are at most $n_{11}^\obs +  n_{00}^\obs+1$ possible values for $\tau  $, a fact noticed by \citet{rigdon2015exact} from a different perspective.

Second, there are no modeling assumptions on the outcome.
The likelihood is completely driven by the physical randomization. 
This idea is not entirely new: such an urn model was used in \cite{neyman::1923}'s seminal causal inference paper for deriving the unbiased moment estimator and confidence interval for $\tau  $.

Third, the above allows for a maximum likelihood estimate of $\tau$, obtained by maximizing the likelihood over all possible $(N_{10}, N_{11})$. 
This likelihood function can also play a central role in model-free Bayesian inference. For example, if we put a uniform prior on the $(N+2)(N+1)/2$ feasible points of $(N_{10}, N_{11})$, the posterior distribution of $(N_{10}, N_{11})$ concentrates only on the $(n_{11}^\obs+1)(n_{00}^\obs+1)$ points within region (\ref{eq::region}) and is proportional to the likelihood. If we have prior information other than the uniform distribution, we could also incorporate it into our Bayesian inference. Based on the posterior distribution of $(N_{10}, N_{11})$, it is straightforward to obtain the posterior distribution of $\tau  $.

\section{Inference Without Monotonicity}
\label{sec::no-mono}

We next relax the monotonicity assumption.
Without monotonicity, the unknown parameters in the Science Table, $(N_{11}, N_{10}, N_{01}, N_{00})$, are no longer identifiable by the observed data.
This introduces an additional complication from before, but the overall intuition is the same. 
Without identifiability of $(N_{11}, N_{10}, N_{01}, N_{00})$, the sampling variance of $\widehat{\tau}$ cannot be identified by the observed data, 
the likelihood function will be flat over a region with multiple points, and Bayesian inference will be strongly driven by the prior distribution. 
We can, however, weaken monotonicity in such a way that preserves identifiability in a sensitivity analysis approach.
This can also be used to generate estimation regions rather than point-estimates.
Finally, this approach also allows for continued use of the likelihood approach discussed above.

The key insight is that, for a known $N_{01}$, all the cell counts of $N_{jk}$'s are identifiable, allowing us to parameterize our urn model with $(N_{10}, N_{11})$ as before. We therefore choose $N_{01}$ as the sensitivity parameter with $N_{01} = 0$ corresponding to monotonicity. 

We first present some extensions of the previous propositions, and then discuss how to use them for this sensitivity analysis approach to variance estimation.
We also extend the likelihood and Bayesian inference procedure from before.

\begin{prop}
\label{thm::identifiability-no-mono}
When $N_{01}$ is known, we can identify the $N_{jk}$'s by
\begin{eqnarray*}
N_{11} = E\left(  \frac{N}{N_0} n_{01}^\obs - N_{01} \right),\quad
N_{00} = E\left(  \frac{N}{N_1} n_{10}^\obs - N_{01}  \right), \quad
N_{10} = E\left(  N+N_{01}-  \frac{N}{N_0} n_{01}^\obs -   \frac{N}{N_1} n_{10}^\obs \right).
\end{eqnarray*}
\end{prop}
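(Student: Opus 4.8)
The plan is to reduce the entire statement to computing the two expectations $E(n_{01}^\obs)$ and $E(n_{10}^\obs)$ under the randomization distribution, and then to solve a small linear system for the cell counts. The first step is to write each relevant observed count as an explicit linear function of the treatment indicators $W_i$. Because a control unit reveals $Y_i(0)$ and a treated unit reveals $Y_i(1)$, we have $n_{01}^\obs = \sumN (1-W_i) Y_i(0)$ and $n_{10}^\obs = \sumN W_i \{1 - Y_i(1)\}$. Here the only source of randomness is the $W_i$; all the potential outcomes, and hence all the $N_{jk}$'s, are fixed numbers.

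The second step is to take expectations term by term. Under complete randomization each unit is assigned to treatment with marginal probability $E(W_i) = N_1/N$ and to control with probability $E(1-W_i) = N_0/N$, so by linearity
\begin{eqnarray*}
E(n_{01}^\obs) = \frac{N_0}{N} \sumN Y_i(0) = \frac{N_0}{N} (N_{11} + N_{01}), \qquad
E(n_{10}^\obs) = \frac{N_1}{N} \sumN \{1 - Y_i(1)\} = \frac{N_1}{N} (N_{00} + N_{01}),
\end{eqnarray*}
where I have used $\sumN Y_i(0) = \#\{i : Y_i(0) = 1\} = N_{11} + N_{01}$ and $\sumN \{1 - Y_i(1)\} = \#\{i : Y_i(1) = 0\} = N_{00} + N_{01}$, reading the column and row margins directly off the Science Table.

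The third step is purely algebraic. Multiplying through gives $\tfrac{N}{N_0} E(n_{01}^\obs) = N_{11} + N_{01}$ and $\tfrac{N}{N_1} E(n_{10}^\obs) = N_{00} + N_{01}$, so subtracting the known $N_{01}$ recovers $N_{11}$ and $N_{00}$ and establishes the first two formulas. For the third, I would add the previous two expressions and invoke the total-count constraint $N = N_{11} + N_{10} + N_{01} + N_{00}$, obtaining
\[
N + N_{01} - \frac{N}{N_0} E(n_{01}^\obs) - \frac{N}{N_1} E(n_{10}^\obs) = N + N_{01} - (N_{11} + N_{01}) - (N_{00} + N_{01}) = N_{10},
\]
as claimed.

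There is no real obstacle here: the argument is the no-monotonicity analogue of Proposition~\ref{thm::identi-mono}, and indeed setting $N_{01}=0$ recovers that result exactly. The only point requiring care is the bookkeeping that translates the observed $2\times 2$ table into the margins of the Science Table, together with the observation that knowing $N_{01}$ supplies precisely the one extra piece of information needed to disentangle $N_{11}$ and $N_{00}$ from the identifiable margin sums $N_{11}+N_{01}$ and $N_{00}+N_{01}$.
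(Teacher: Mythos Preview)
Your proof is correct and follows essentially the same approach as the paper. The paper phrases the computation in terms of the sample proportions $\widehat{p}_0 = n_{01}^\obs/N_0$ and $1-\widehat{p}_1 = n_{10}^\obs/N_1$ and invokes Lemma~\ref{lemma::variance} to get $E(\widehat{p}_w)=p_w$, whereas you compute $E(n_{01}^\obs)$ and $E(n_{10}^\obs)$ directly from $E(W_i)=N_1/N$; the algebra and the underlying idea are identical.
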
 
The above  derives from  the marginal distributions of the potential outcomes imposing weak restrictions on the association.
This restriction comes from the data being binary.

\begin{prop}
\label{thm::bounds-sensitivity}
The number of potentially harmed units, $N_{01}$, is bounded by
\begin{eqnarray}
\max(0, - N\tau  ) \leq N_{01} \leq \min\{   Np_0, N(1-p_1)    \} .
\label{eq::frechet}
\end{eqnarray}
If we assume a non-negative correlation between the potential outcomes, the bounds become
$$
\max(0, - N\tau  ) \leq
N_{01}\leq Np_0(1-p_1).
$$
If we further assume that non-negative average causal effect $\tau \geq 0$, then the bounds become
\begin{eqnarray}
\label{eq::bound-n01}
0\leq N_{01}\leq Np_0(1-p_1) . 
\end{eqnarray}
\end{prop}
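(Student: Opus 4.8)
The plan is to derive all three bounds directly from the margin identities and the non-negativity of the cell counts $N_{jk}$, treating \eqref{eq::frechet} as the classical Fr\'echet--Hoeffding bounds for a $2\times 2$ table with fixed margins. First I would record the basic linear relations that tie the cell counts to the margins: from Table~\ref{tb::science} and the definitions of $p_1, p_0, \tau$, we have $N_{11}+N_{10}=Np_1$, $N_{11}+N_{01}=Np_0$, $N_{01}+N_{00}=N(1-p_1)$, and $N_{10}-N_{01}=N\tau$. Everything else follows by combining these with the constraints $N_{jk}\geq 0$.

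For the general bounds in \eqref{eq::frechet}, the lower bound holds because $N_{01}\geq 0$ is automatic while $N_{10}=N_{01}+N\tau\geq 0$ forces $N_{01}\geq -N\tau$, giving $N_{01}\geq \max(0,-N\tau)$. The upper bound comes from $N_{11}\geq 0$, which yields $N_{01}=Np_0-N_{11}\leq Np_0$, together with $N_{00}\geq 0$, which yields $N_{01}=N(1-p_1)-N_{00}\leq N(1-p_1)$; taking the minimum gives $N_{01}\leq\min\{Np_0, N(1-p_1)\}$.

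The step requiring the most care is translating the non-negative correlation assumption into a usable inequality, which is the part I expect to be the main obstacle. Here I would compute the finite-population covariance of the potential outcomes. Since $\sumN Y_i(1)Y_i(0)=N_{11}$, we obtain $\cov(Y(1),Y(0)) = N_{11}/N - p_1 p_0$, so a non-negative correlation is equivalent to $N_{11}\geq Np_1 p_0$. Substituting into $N_{01}=Np_0-N_{11}$ sharpens the upper bound to $N_{01}\leq Np_0 - Np_1 p_0 = Np_0(1-p_1)$, while the lower bound is untouched. A quick check confirms $Np_0(1-p_1)\leq\min\{Np_0,N(1-p_1)\}$, so this bound is genuinely tighter, as it must be under the added assumption.

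Finally, imposing $\tau\geq 0$ makes $-N\tau\leq 0$, so $\max(0,-N\tau)=0$ and the lower bound collapses to $0$, delivering \eqref{eq::bound-n01}. Beyond the covariance computation and the verification that the sign convention in the correlation assumption lines up with the claimed direction of the bound, the argument is pure bookkeeping with the four margin identities and the count non-negativity constraints.
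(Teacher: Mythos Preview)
Your proposal is correct and follows essentially the same route as the paper. The only cosmetic difference is that for the non-negative correlation step the paper works with Yule's measure $N_{11}N_{00}-N_{10}N_{01}$ and expands it in terms of $N_{01}$, whereas you write the covariance directly as $N_{11}/N-p_1p_0$; since $N_{11}N_{00}-N_{10}N_{01}=N(N_{11}-Np_1p_0)$, the two computations are algebraically identical.
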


The bounds in \eqref{eq::frechet} are the Frech\'et--Hoeffding bounds \citep[cf.][]{nelsen2007introduction} for $N_{01}$ based on the marginal distributions of the potential outcomes. In many realistic cases, it seems plausible to assume a nonnegative association between the potential outcomes. Without loss of generality, we assume that our data have $\widehat{\tau}>0$, and therefore we either assume monotonicity or conduct sensitivity analysis within the empirical range of \eqref{eq::bound-n01}.

\begin{prop}
\label{thm::variance-bounds-non-mono}
With a known $N_{01}$, the variance of $\widehat{\tau}$ is 
\begin{equation}
\var(\widehat{\tau}  ) = 
\frac{N}{N-1} \left\{   \frac{p_1(1-p_1)}{N_1} + \frac{p_0(1-p_0)}{N_0}   -  
\frac{ \tau  (1-\tau  )  }{N} - \frac{2N_{01}}{N^2} \right\} .  \label{eq::var-for-N01}
\end{equation}
The bounds of the above variance over the possible values of $N_{01}$ as delineated by region \eqref{eq::bound-n01} are
\begin{equation*}
\frac{N}{N-1} \left\{   \frac{\frac{N_0}{N} p_1(1-p_1)}{N_1} + \frac{\frac{N_1}{N} p_0(1-p_0)}{N_0}   \right\} 
\leq 
\var(\widehat{\tau}  )  
\leq 
\frac{N}{N-1} \left\{   \frac{p_1(1-p_1)}{N_1} + \frac{p_0(1-p_0)}{N_0}  - \frac{\tau  (1-\tau  )}{N}   \right\}  . 
\end{equation*}
\end{prop}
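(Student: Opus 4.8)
The plan is to obtain the exact variance \eqref{eq::var-for-N01} from \citet{neyman::1923}'s finite-population variance identity and then optimize the resulting expression in the single free sensitivity parameter $N_{01}$. I would start from the standard representation
\[
\var(\widehat{\tau}) = \frac{S_1^2}{N_1} + \frac{S_0^2}{N_0} - \frac{S_\tau^2}{N},
\]
where $S_1^2$ and $S_0^2$ are the finite-population variances of $Y_i(1)$ and $Y_i(0)$, and $S_\tau^2$ is that of the individual effects $\tau_i$. Because the outcomes are binary, a one-line computation gives $S_w^2 = \{N/(N-1)\}p_w(1-p_w)$ for $w=0,1$, which already reproduces the first two terms of \eqref{eq::var-for-N01}. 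The only new ingredient relative to Proposition \ref{thm::moments-mono} is $S_\tau^2$: since $\tau_i \in \{-1,0,1\}$ we have $\sum_i \tau_i^2 = N_{10}+N_{01}$ and $\sum_i \tau_i = N\tau$, and the identity $\tau = (N_{10}-N_{01})/N$ forces $N_{10}+N_{01} = N\tau + 2N_{01}$; hence $S_\tau^2 = \{N/(N-1)\}\{\tau(1-\tau) + 2N_{01}/N\}$. Substituting the three variances into Neyman's identity yields \eqref{eq::var-for-N01} term by term, and putting $N_{01}=0$ recovers \eqref{eq::var-mono} as a sanity check.

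For the bounds, I would observe that \eqref{eq::var-for-N01} is affine in $N_{01}$ with slope $-2/\{N(N-1)\} < 0$, so $\var(\widehat{\tau})$ is strictly decreasing in $N_{01}$. Over the admissible range \eqref{eq::bound-n01}, namely $0 \le N_{01} \le Np_0(1-p_1)$, the maximum therefore occurs at $N_{01}=0$ and the minimum at $N_{01}=Np_0(1-p_1)$. Evaluating at $N_{01}=0$ returns the stated upper bound verbatim. Evaluating at $N_{01}=Np_0(1-p_1)$ requires substituting this value and simplifying with $\tau = p_1 - p_0$: the algebraic identity $\tau(1-\tau) + 2p_0(1-p_1) = p_1(1-p_1) + p_0(1-p_0)$ collapses the two $N_{01}$-dependent terms into $p_1(1-p_1)+p_0(1-p_0)$, after which regrouping through $1/N_w - 1/N = N_{1-w}/(NN_w)$ turns $p_w(1-p_w)/N_w - p_w(1-p_w)/N$ into $\{N_{1-w}/N\}p_w(1-p_w)/N_w$ and delivers the lower bound.

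There is no genuine analytic difficulty here; the argument is essentially bookkeeping built on Neyman's identity and on the identifiability already established in Proposition \ref{thm::identifiability-no-mono}. The step demanding the most care is the final simplification at the lower endpoint, where one must verify the cancellation $\tau(1-\tau) + 2p_0(1-p_1) = p_1(1-p_1) + p_0(1-p_0)$ and then recognize the regrouping that produces the weights $N_0/N$ and $N_1/N$. I would also take care to use the endpoints supplied by Proposition \ref{thm::bounds-sensitivity} rather than the wider Frech\'et--Hoeffding interval, since the strict monotonicity of the variance in $N_{01}$ means the two bounds are pinned down entirely by these endpoints.
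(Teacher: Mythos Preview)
Your proposal is correct and follows essentially the same route as the paper: both start from Neyman's variance identity $\var(\widehat{\tau}) = S_1^2/N_1 + S_0^2/N_0 - S_\tau^2/N$, compute $S_\tau^2$ from $\sum_i \tau_i^2 = N_{10}+N_{01}$ together with $N_{10}+N_{01} = N\tau + 2N_{01}$, and then read off the bounds from the monotonicity of \eqref{eq::var-for-N01} in $N_{01}$ over the range \eqref{eq::bound-n01}. The paper's proof is terser---it simply says the bounds ``follow directly'' from $0\le N_{01}\le Np_0(1-p_1)$---whereas you spell out the algebraic identity $\tau(1-\tau)+2p_0(1-p_1)=p_1(1-p_1)+p_0(1-p_0)$ and the regrouping $1/N_w - 1/N = N_{1-w}/(NN_w)$ that give the lower bound its displayed form; this extra detail is helpful and correct.
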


The upper bound of $\var(\widehat{\tau})$ corresponds to monotonicity, and the lower bound corresponds to uncorrelated potential outcomes.

\subsection{Variance estimation in a sensitivity analysis}

Although $\tau$ depends only on the marginal distributions of the potential outcomes, the variance of $\widehat{\tau}$ depends further on the association between the potential outcomes. \citet{ding::2015} showed that (\ref{eq::var-mono}) is an upper bound for the true sampling variance of $\widehat{\tau}$ without monotonicity. However, this result does not show explicitly the impact of the association between the potential outcomes on the variability of the estimator for $\tau$. 
Proposition~\ref{thm::variance-bounds-non-mono} does.
In particular, we can conduct a sensitivity analysis by varying $N_{01}$ within \eqref{eq::bound-n01} to get a series of variance estimators according to \eqref{eq::var-for-N01}. If we believe that $N_{01}$ is in a specific range, we can take the maximum and minimum of the variances as a range of possible uncertainty estimates. Generally, as $N_{01}$ increases the variance goes down; the most conservative (largest) variance estimate corresponds to monotonicity.

\subsection{Likelihood and Bayesian inference}
The discussion above allows for getting sharper estimates on the variance of the classic moment estimators, as compared to the classic Neyman approach. We can also extend the likelihood approach shown for monotonicity in a similar fashion to obtain estimators restricted to the support of the parameter space. For a fixed $N_{01}$, the likelihood function, based on an urn model with four types of balls, is given by the following theorem:

\begin{theorem}
\label{thm::like-non-mono}
Given a fixed $N_{01}$, the likelihood function for $(N_{10}, N_{11})$ is
\begin{eqnarray}
\label{eq::like-non-mono}
\sum_{x\in \mathcal{F}}  
\binom{N_{11}}{x}
\binom{N_{10}}{ n_{11}^\obs - x }
\binom{N_{01}}{ N_{01}+N_{11}-n_{01}^\obs-x}
\binom{N-N_{11}-N_{10}-N_{01}}{ n_{10}^\obs+n_{01}^\obs+x-N_{01}-N_{11} }
\Big /
\binom{N}{N_1}
,
\end{eqnarray}
where the feasible region of the above summation is $\mathcal{F} = \left\{ x : L \leq x \leq U \right\}$ with
\begin{eqnarray*}
L = \max( 0, n_{11}^\obs-N_{10}, N_{11}-n_{01}^\obs, N_{01}+N_{11}-n_{10}^\obs-n_{01}^\obs    )  ,\\
U = \min( N_{11}, n_{11}^\obs, N_{01}+N_{11}-n_{01}^\obs, N-N_{10}-n_{10}^\obs-n_{01}^\obs ).
\end{eqnarray*}

Note that the $x$ in the sum in \eqref{eq::like-non-mono} represents the number of always-survivors randomized to the treatment group; the formula marginalizes over this to get the overall likelihood. When $N_{01} = 0$, the feasible region of $x$ collapses to the point $x=N_{11} - n_{01}^\obs$, and the likelihood function in Theorem \ref{thm::like-non-mono} reduces to the one in Theorem \ref{thm::like-mono}. The proof of Theorem \ref{thm::like-non-mono} in the Appendix shows that, for fixed $0\leq N_{01} \leq N\widehat{p}_0 (1-\widehat{p}_1)$, the likelihood is zero outside the following region of $(N_{10}, N_{11})$:
\begin{eqnarray}
\label{eq::feasible}
\begin{array}{rll}
\max(0, n_{01}^\obs - N_{01} ) \leq & N_{11} & \leq  \min( n_{01}^\obs + n_{11}^\obs, N-n_{00}^\obs-N_{01} ),\\
0\leq &N_{10} & \leq  N-n_{01}^\obs-n_{10}^\obs,\\
\max(n_{11}^\obs+n_{01}^\obs-N_{01}, n_{11}^\obs)\leq & N_{10}+N_{11} & \leq  N-n_{10}^\obs. 
\end{array}
\end{eqnarray}
\end{theorem}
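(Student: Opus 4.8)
The plan is to realize the completely randomized experiment as sampling without replacement from an urn containing the four ball types $(1,1), (1,0), (0,1), (0,0)$ with counts $N_{11}, N_{10}, N_{01}, N_{00}=N-N_{11}-N_{10}-N_{01}$, and to compute the probability of the observed table $\{n_{wy}^\obs\}$ by conditioning on the internal composition of the treatment arm. First I would let $a,b,c,d$ denote the numbers of the four types assigned to treatment, so that $a+b+c+d=N_1$, and record how each type maps into the observed cells: always-survivors and helped units contribute to $n_{11}^\obs$ when treated and to $n_{01}^\obs,n_{00}^\obs$ respectively when in control, while harmed and never-survivors contribute to $n_{10}^\obs$ when treated and to $n_{01}^\obs,n_{00}^\obs$ when in control. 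This bookkeeping yields the identities $n_{11}^\obs=a+b$, $n_{10}^\obs=c+d$, $n_{01}^\obs=(N_{11}-a)+(N_{01}-c)$, together with the complementary relation for $n_{00}^\obs$.

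The key step is to observe that these identities leave exactly one internal degree of freedom. Writing $x=a$ for the number of always-survivors in treatment, I would solve the identities to obtain $b=n_{11}^\obs-x$, $c=N_{01}+N_{11}-n_{01}^\obs-x$, and $d=n_{10}^\obs+n_{01}^\obs+x-N_{01}-N_{11}$. Since the assignment is uniform over all $\binom{N}{N_1}$ subsets of size $N_1$, the number of assignments producing a given internal composition $(a,b,c,d)$ is $\binom{N_{11}}{a}\binom{N_{10}}{b}\binom{N_{01}}{c}\binom{N_{00}}{d}$, so the probability of the observed table is obtained by substituting the above expressions for $b,c,d$ and summing over all admissible $x$. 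This reproduces exactly the multivariate hypergeometric sum in \eqref{eq::like-non-mono}.

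The feasible set $\mathcal{F}$ then arises from the requirements $0\le a\le N_{11}$, $0\le b\le N_{10}$, $0\le c\le N_{01}$, and $0\le d\le N_{00}$: each of these eight inequalities becomes a one-sided bound on $x$, and taking the tightest lower and upper bounds recovers $L$ and $U$ as stated. For the reduction at $N_{01}=0$, I would note that the two $c$-constraints force $x=N_{11}-n_{01}^\obs$, collapsing the sum to a single term in which the $(0,1)$ binomial is $\binom{0}{0}=1$ and the remaining three factors reproduce Theorem \ref{thm::like-mono}. Finally, the positivity region \eqref{eq::feasible} is characterized by $L\le U$, i.e.\ by requiring every lower bound on $x$ to not exceed every upper bound; expanding the sixteen resulting inequalities $\ell_i\le u_j$ and discarding those that are trivially true (reducing to $n_{01}^\obs\ge 0$, $N_{00}\ge 0$, or $\sum n_{wy}^\obs=N$) or that are implied by the assumed range $0\le N_{01}\le N\widehat{p}_0(1-\widehat{p}_1)$ leaves precisely the five nonredundant constraints on $N_{11}$, $N_{10}$, and $N_{10}+N_{11}$ displayed in \eqref{eq::feasible}.

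I expect the main obstacle to be this last step: the passage from the compact description of $\mathcal{F}$ to the region \eqref{eq::feasible} is a purely combinatorial reduction of sixteen inequalities, and the delicate part is verifying that the inequalities I drop are genuinely redundant --- in particular that the cross-terms involving $N_{01}$ (such as $\ell_4\le u_1$, giving $N_{01}\le n_{10}^\obs+n_{01}^\obs$) are automatic within the admissible sensitivity range rather than imposing additional cuts on $(N_{10},N_{11})$.
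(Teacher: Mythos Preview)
Your proposal is correct and follows essentially the same route as the paper's own proof: the paper likewise introduces the latent count $x=N_{11,1}$ of always-survivors assigned to treatment, solves for the remaining $N_{bc,w}$ in terms of $x$ and the observed cell counts, reads off $\mathcal{F}$ from the eight nonnegativity/upper-bound constraints, and then expands the sixteen $\ell_i\le u_j$ inequalities to obtain the region \eqref{eq::feasible}, discarding in particular $N_{01}\le n_{10}^\obs+n_{01}^\obs$ as redundant on the sensitivity range. Your anticipated ``main obstacle'' is exactly the step the paper isolates; one small slip is the phrase ``five nonredundant constraints'' --- the display \eqref{eq::feasible} in fact encodes nine one-sided inequalities, not five.
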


We can then do a sensitivity analysis to see how the likelihood function and the maximum likelihood estimator change as we increase $N_{01}$. These curves can also be calculated for any estimand of interest as the population is fully specified by $( N_{11}, N_{10}) $, given $N_{01}$. For Bayesian inference, if we impose a uniform prior on $(N_{10}, N_{11})$, the posterior distribution of $(N_{10}, N_{11})$ is proportional to (\ref{eq::like-non-mono}). This immediately gives posterior distributions of $\tau$.

\citet{copas::1973} treated \eqref{eq::like-non-mono} as a likelihood function for $(N_{11}, N_{10}, N_{01})$, and observed its pathological behaviors due to the unidentifiability issue. An alternative Bayesian approach might impose a prior distribution on the sensitivity parameter $N_{01}$. Regardless of the identifiability issue, the posterior distributions of the parameters of interest will always be proper because of finite support. \citet{watson2014complications} gave detailed discussion on Bayesian inference by imposing prior distributions on $(N_{11}, N_{10}, N_{01})$ and making connections to posterior predictive checks \citep{rubin::1984, rubin::1998}. However, inference might then be driven by the prior distribution of $N_{01}$, an unidentifiable parameter from the data. Therefore, we recommend the sensitivity analysis approach in both likelihood and Bayesian inference to explicitly show the impact of the correlation between potential outcomes.

\section{The Attributable Effect and the Treatment Effect on the Treated}
\label{sec::attributable}

In the previous sections, we focused the average treatment effect, which is a fixed parameter depending only on the Science table. In practice, other causal quantities may be of scientific interest. For instance, \cite{rosenbaum::2001} proposed to estimate the effect attributable to the treatment,
$$
A = \sumN W_i \tau_i, 
$$
which is closely related to the average treatment effect on the treated units
$
\tau  ^W = \sumN W_i \tau_i / N_1 = A/N_1.
$
Both causal quantities $A$ and $\tau^W$ depend on the treatment assignment as well as the Science table, and thus they are themselves random variables. Therefore, as \cite{rosenbaum::2001} suggested, we need to extend the traditional concepts of point and interval estimation to point and interval prediction of random variables in frequentists' inference. Because the difference between $A$ and $\tau^W$ is the fixed scaling factor $N_1$, we discuss only inference of the attributable effect $A.$

As shown in the proof of Theorem \ref{thm::like-non-mono}, the attributable effect can be written as
\begin{eqnarray}
A = n_{11}^\obs + n_{01}^\obs - N_{01} - N_{11} = n_{11}^\obs + n_{01}^\obs  - S,
\label{eq::attributable}
\end{eqnarray}
with $S=N_{01} + N_{11}$ defined in Table \ref{tb::science}.
Note that $S$ is a parameter depending on the Science table (see Table \ref{tb::science}).
Formula \eqref{eq::attributable} shows a linear relationship between $A$ and $S$, which makes statistical inference of $A$ simpler via statistical inference of $S$. To be more specific, if we had a point estimator $\widehat{S}$ for $S$, then we would have a point predictor $\widehat{A} = n_{11}^\obs + n_{01}^\obs - \widehat{S}$ for $A$.
Furthermore, if we had an interval estimator $[\widehat{S}_l, \widehat{S}_u]$ for $S$, then we would have an interval predictor $[\widehat{A}_l, \widehat{A}_u]$ for $A$, where $\widehat{A}_l =  n_{11}^\obs + n_{01}^\obs - \widehat{S}_l$ and $\widehat{A}_u=  n_{11}^\obs + n_{01}^\obs - \widehat{S}_u$. 
We can thus separate out and capture the randomness in our target estimand with observed data, reducing the statistical uncertainty to a classic parameter estimation problem.

\subsection{Exact inference}

Randomization induces a Hypergeometric distribution $n_{01}^\obs\sim H_S$, where $H_S$ has probability mass function $P(H_S=h) = \binom{S}{h}\binom{N-S}{N_1-h}/\binom{N}{N_1}$ for $ \max(0, S-N_0) \leq  h \leq \min(S,N_1)$. 
This Hypergeometric distribution depends on the unknown parameter $S$, and we can thus use the number of positive outcomes under control, $n_{01}^\obs$, as our observed statistic for conducting inference on $S$.
Fortunately, inference on $S$ based on the Hypergeometric $n_{01}^\obs$ is a classical statistical problem. For example, we can conduct a series of tests $H_{0s}: S=s$, and calculate the $p$-value for each fixed $s$ by measuring the extremeness of $n_{01}^\obs$ given $S$. A choice of the two-sided $p$-value is
\begin{eqnarray}
\label{eq::pvalue}
p(s) = \sum_{ P(H_s=h)   \leq P(H_s=n_{01}^\obs)  } P(H_s=h) , 
\end{eqnarray}
i.e., the sum of all the probability masses that are smaller than or equal to the probability mass of the observed value of the Hypergeometric random variable. 
This effectively orders the possible values of $H_s$, given $s$, by their likelihood, and the sum in \eqref{eq::pvalue} captures the total probability mass in the tails given this ordering.
The Hodges--Lehmann-type point estimator for $S$ corresponds to the $s$ values that attain the maximum $p$-value \citep{hodges1963estimates, rosenbaum::2002}; the point estimator may not be unique due to discreteness. The $1-\alpha$ interval estimator contains all the $s$ values such that $p(s) > \alpha$.

The choice of two-sided $p$-value in \eqref{eq::pvalue} leads to the same procedure as \cite{rosenbaum::2001} and \cite{rigdon2015randomization}. 
We note, however, that the classical literature on Fisher's exact test also proposed other choices of two-sided $p$-values based on a Hypergeometric random variable \citep[cf.][page 92]{agresti2013categorical}. 
Moreover, we could alternatively directly construct confidence intervals for $S$ based on the Hypergeometric $n_{01}^\obs$ without inverting tests. 
Please see \citet{wang2015exact} for classical methods and recent developments in constructing confidence intervals for Hypergeometric parameters. 
Overall, the relationship \eqref{eq::attributable} allows for constructing different point and interval estimators for $A$ based on different approaches for $S$, of which the previous approaches of \citet{rosenbaum::2001} and \citet{rigdon2015randomization} are special cases. 
Furthermore, to make exact inference of the attributable effect, \cite{rosenbaum::2001} invoked monotonicity, but our discussion above does not.  
The inference with or without monotonicity is the same.

\subsection{Neyman-type repeated sampling evaluation}

A natural estimator for $A$ is $N_1\widehat{\tau}$. The following proposition shows that $N_1\widehat{\tau}$ is an unbiased predictor of $A$, and the mean squared error for this prediction depends only on the marginal distribution of $Y(0)$.

\begin{prop}
\label{thm::A-neyman}
Over all possible randomizations, $E(A - N_1 \widehat{\tau}  ) = 0$ and
\begin{eqnarray}
\var(  A - N_1 \widehat{\tau}   ) =  \frac{NN_1}{N_0} S_0^2=   \frac{N^2N_1}{N_0(N-1)} p_0(1-p_0),
\label{eq::attributable-effect}
\end{eqnarray}
where $S_0^2$ is the finite population variance of the control potential outcome. Therefore, $A$ can be unbiasedly predicted by $N_1\widehat{\tau}  $ with estimated mean squared error $  N^2N_1\widehat{p}_0(1-\widehat{p}_0) / \{ N_0(N-1) \} $.
\end{prop}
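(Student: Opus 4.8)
The plan is to reduce the two-dimensional randomness in $A - N_1\widehat{\tau}$ to a single observed count, $n_{01}^\obs$, whose exact finite-sample distribution is already known to be Hypergeometric. Once the prediction error is written as an affine function of $n_{01}^\obs$, both the unbiasedness and the mean squared error follow immediately from the mean and variance of a Hypergeometric random variable, with no further combinatorics needed and no unidentifiable association parameter to contend with.

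First I would collect the two algebraic representations that the excerpt has already supplied. From \eqref{eq::attributable} we have $A = n_{11}^\obs + n_{01}^\obs - S$, and from the explicit form of the moment estimator, $N_1\widehat{\tau} = N_1\widehat{p}_1 - N_1\widehat{p}_0 = n_{11}^\obs - (N_1/N_0)\,n_{01}^\obs$, using $\widehat{p}_1 = n_{11}^\obs/N_1$ and $\widehat{p}_0 = n_{01}^\obs/N_0$. Subtracting, the random $n_{11}^\obs$ terms cancel and the $n_{01}^\obs$ coefficients combine through $1 + N_1/N_0 = N/N_0$, giving the key identity
\begin{eqnarray*}
A - N_1\widehat{\tau} = \frac{N}{N_0}\, n_{01}^\obs - S .
\end{eqnarray*}
This is the crux of the argument: since $S = N_{01}+N_{11}$ is a fixed feature of the Science Table (equivalently $S = Np_0$), the entire prediction error is a deterministic rescaling and shift of the single random variable $n_{01}^\obs$.

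Next I would invoke the Hypergeometric law $n_{01}^\obs \sim H_S$ recorded above in the exact-inference discussion: drawing $N_0$ control units out of $N$, of which exactly $S$ have $Y_i(0)=1$, yields $E(n_{01}^\obs) = N_0 S/N$ and $\var(n_{01}^\obs) = N_0 N_1 S(N-S)/\{N^2(N-1)\}$. Substituting the mean into the identity gives $E(A - N_1\widehat{\tau}) = (N/N_0)(N_0 S/N) - S = 0$, establishing unbiased prediction. Squaring the constant $N/N_0$ and multiplying by the Hypergeometric variance gives $\var(A - N_1\widehat{\tau}) = N_1 S(N-S)/\{N_0(N-1)\}$. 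I would then translate this into the two advertised forms: writing $S = Np_0$ so that $S(N-S) = N^2 p_0(1-p_0)$ produces the expression in $p_0$, while recognizing $S_0^2 = Np_0(1-p_0)/(N-1)$ as the finite-population variance of the binary control outcome gives the compact form $(NN_1/N_0)S_0^2$. The estimated mean squared error follows by the standard plug-in of $\widehat{p}_0$ for $p_0$.

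The only real obstacle is the bookkeeping in the first step, namely confirming that the observed-count terms cancel correctly and that the surviving coefficient is exactly $N/N_0$ rather than, say, $N_1/N_0$; everything afterward is a mechanical application of the standard Hypergeometric moments. The structural payoff worth emphasizing is that this reduction makes the error depend only on the marginal distribution of $Y(0)$, which is why inference for $A$ escapes the dependence on the association between the potential outcomes that complicates $\var(\widehat{\tau})$.
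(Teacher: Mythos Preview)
Your argument is correct and is essentially the same reduction as the paper's: both show that $A - N_1\widehat{\tau}$ is an affine function of a single random count depending only on the control potential outcomes, and then read off the finite-population mean and variance. The paper starts from the definitions $A=\sum_i W_i\tau_i$ and $N_1\widehat{\tau}=\sum_i W_iY_i(1)-(N_1/N_0)\sum_i(1-W_i)Y_i(0)$, simplifies to $-(N/N_0)\sum_i W_iY_i(0)$ plus a constant, and applies Lemma~\ref{lemma::variance}; you instead start from \eqref{eq::attributable} and the observed-count form of $N_1\widehat{\tau}$ to obtain the complementary expression $(N/N_0)\,n_{01}^\obs-S$ and apply the Hypergeometric moments, which is the same computation since $n_{01}^\obs+\sum_i W_iY_i(0)=S$ is fixed.
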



Proposition \ref{thm::A-neyman} does not rely on monotonicity. Moreover, the first identity in \eqref{eq::attributable-effect} also holds for general outcomes. Interestingly, the variance formula \eqref{eq::attributable-effect} does not depend on the association between the potential outcomes, which was hinted at by \citet{robins::1988} and \citet{hansen2009attributing}.
In particular, by allowing the target of estimation to vary in a randomized experiment, one can seemingly avoid the unidentifiable issue, but the resulting analysis is then conditional, in some sense, on the realized assignment.

\subsection{Bayesian inference}

Bayesian posterior inference for $A$ is straightforward conditional on the observed data. Because of the linear relationship between $A$ and $S$ in \eqref{eq::attributable}, the posterior distribution of $S=N_{01} + N_{11}$ determines the posterior distribution of $A.$ Therefore, with fixed $N_{01}$ (zero under monotonicity and positive for a sensitivity analysis), obtaining the posterior distribution of $A$ is straightforward once we obtain the posterior distribution of $N_{11}.$

\section{Illustration}
\label{sec::illustration}

We re-analyze the data in \citet[][pp. 191]{rosenbaum::2002} concerning death in the London underground. In the London underground, some train stations have a drainage pit below the tracks. When an ``incident'' happens (i.e., a passenger falls, jumps or is pushed from the station platform), such a pit is a place to escape contact with the wheels of the train. Researchers are interested in the mortality in stations with and without such a pit. In stations without a pit, only $5$ lived out of $21$ recorded ``incidents.''
For ``incidents'' in stations with a pit, $18$ out of $32$ lived. Therefore, the observed data can be summarized by $(n_{11}^\obs, n_{10}^\obs, n_{01}^\obs, n_{00}^\obs) = (18, 14, 5, 16)$, viewing ``pit'' versus ``no pit'' as treatment versus control, and life as the outcome. 
For illustration, we view this data set as from a hypothetical completely randomized experiment, ignoring any issues of confounding.

Under monotonicity, the moment-based estimator is $\widehat{\tau} = 0.324$, i.e., we estimate that the chance of survival is about $32$ percentage points higher for stations with a pit. Using the variance estimator in \eqref{eq::var-esti-mono} we end up with a confidence interval of $[0.106, 0.543]$, which is $13\%$ narrower than the usual one of $[0.072, 0.577]$.  See the first row of Table~\ref{tb::rosenbaum}.

We then conduct a sensitivity analysis on monotonicity by varying the value of $N_{01}$, where $N_{01}=0$ corresponds to monotonicity, $N_{01}=5$ corresponds to independent potential outcomes, and $N_{01}=2$ is a value between these two extreme cases. Rows 2 and 3 of Table~\ref{tb::rosenbaum} show estimates and associated confidence intervals for these two different values of $N_{01}$. They are smaller. If we believe some would be harmed, we are then more certain of the average causal effect.
Our improved variance estimator \eqref{eq::var-esti-mono} and the Bayesian approach with a uniform prior both provide improved inference.
The moment estimator is close to the Bayesian posterior modes, but there is slight shift of $2$ percentage points.

Figure \ref{fg::rosenbaum-post} shows the posterior distributions of $\tau  $ with $N_{01}=0,2$ and $5$. The posterior distribution has a higher peak and lighter tails with larger $N_{01}$. This conforms to the frequentists' property that the variance of $\widehat{\tau}  $ becomes larger when $N_{01}$ get smaller, with monotonicity being the extreme case.

\begin{table}[t]
\centering
\caption{Moment and Bayes estimators with $(n_{11}^\obs, n_{10}^\obs, n_{01}^\obs, n_{00}^\obs) = (18, 14, 5, 16)$. Each of columns 2--4 shows the point estimator, interval estimator and its length.}
\label{tb::rosenbaum}
\begin{tabular}{cccc}
$N_{01}$& Neyman's variance & Improved variance & Bayes  \\
$0$	&$0.324\ [0.072, 0.577]\ 0.505$&  $0.324\ [0.106, 0.543]\ 0.437$ & $0.301\ [0.075, 0.509]\ 0.434$\\
$2$	&same as above						&  $0.324\ [0.119, 0.530]\ 0.411$& $0.301\ [0.075, 0.490]\ 0.415$\\
$5$	&same as above						&  $0.324\ [0.141, 0.508]\ 0.367$& $0.301\ [0.094, 0.472]\ 0.378$
\end{tabular}
\end{table}

Regarding the attributable effect under monotonicity,
the Hodges--Lehmann-type estimator is $9$, $10$ or $11$, and the $95\%$ interval estimate is $[2,16]$. The posterior mode for $A$ is $10$, and the $95\%$ highest probability interval for $A$ is $[1,16]$. Figure \ref{fg::rosenbaum-attri} compares the posterior probabilities and standardized $p$-values for testing $A=a$, showing that they have similar shapes.
The moment estimator for $A$ is $10.38$ with confidence interval $[1.56, 19.20]$. The moment estimator is outside of the range of the parameter because $A$ must be an integer. 
Worse, the associated interval estimate is wider, with an upper limit larger than $n_{11}^\obs = 18$, the maximum possible value of $A$ under monotonicity due to $A=n_{11}^\obs + n_{01}^\obs - N_{11} \leq n_{11}^\obs$.

\begin{figure}[hb]
\centering
\subfigure[Sensitivity analysis for the posterior distribution of $\tau$. Three posterior distributions of $\tau$ correspond to three values of the sensitivity parameter $N_{01}$.]{\label{fg::rosenbaum-post}
\includegraphics[width = 0.8\textwidth]{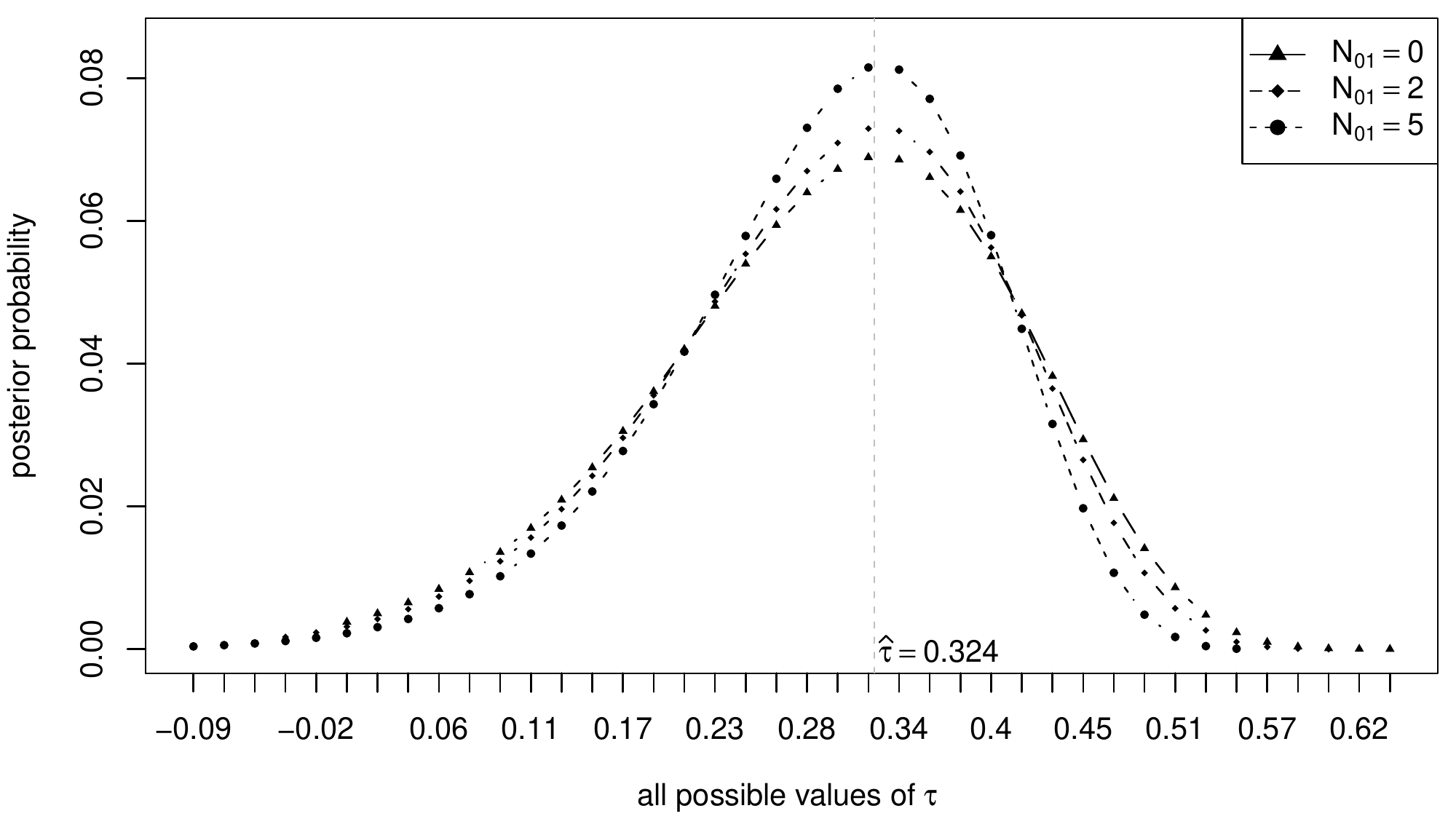}
}
\subfigure[Attributable effect under monotonicity. The $p$-values are standardized to have summation $1$, in order to compare with the posterior distribution.]{\label{fg::rosenbaum-attri}
\includegraphics[width = 0.8\textwidth]{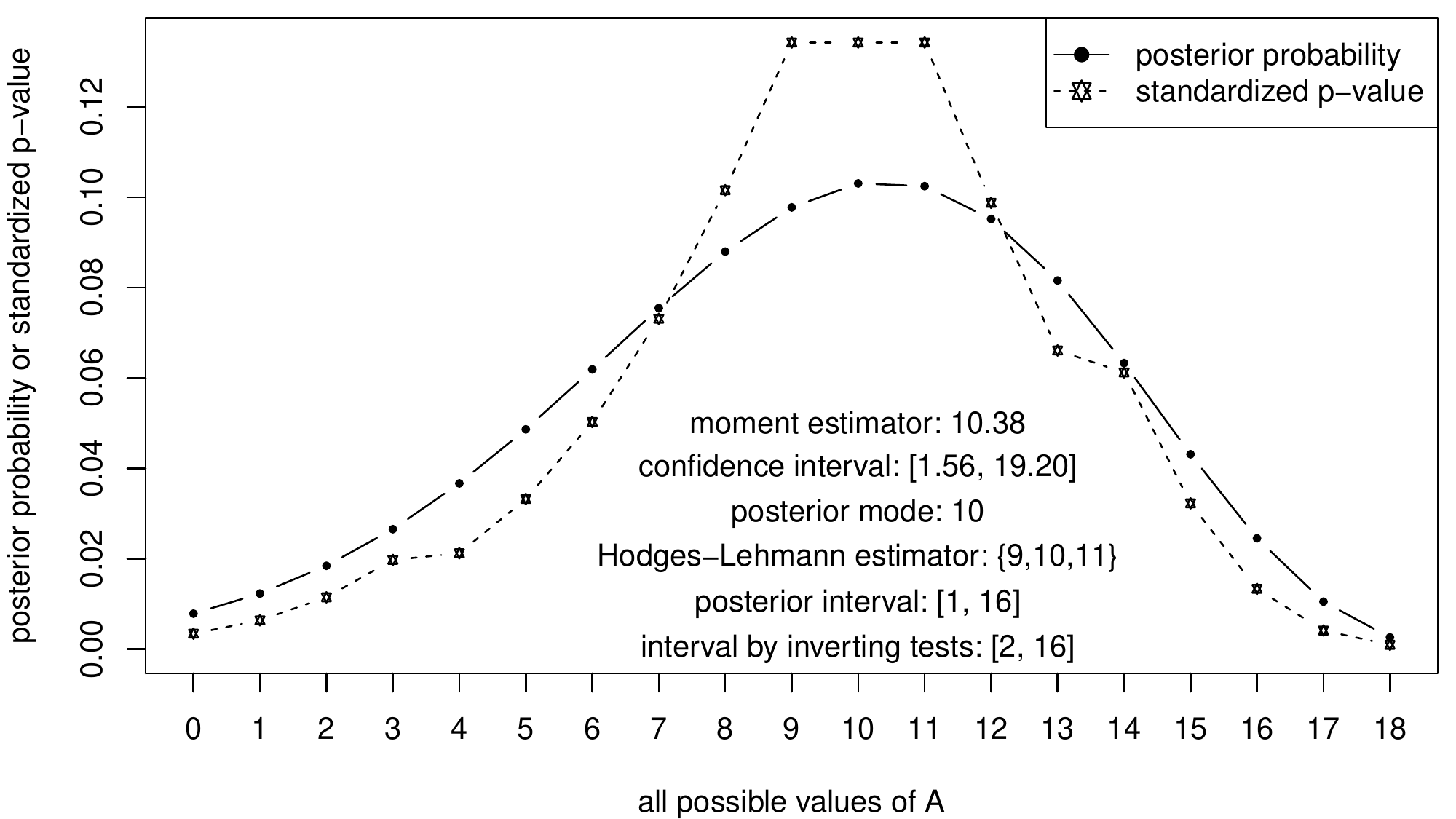}
}
\caption{Example with observed data $(n_{11}^\obs, n_{10}^\obs, n_{01}^\obs, n_{00}^\obs) = (18, 14, 5, 16)$}
\end{figure}

\section{Discussion}
\label{sec::discussion}


For binary experimental data, we proposed several model-free inferential procedures for the average treatment effect and the attributable effect.
We believe demonstrating that likelihood and Bayesian estimation without modeling is possible is a worthwhile proof of concept for an alternate form of thinking about estimation when the assignment mechanism is known.
For further connections and comparisons, see \citet{greenland1991logical}, \citet{ding::2014}, \citet{chiba2015exact}, and \citet{ding::2015}.

Some researchers have proposed randomization-based procedures for causal effects with noncompliance \citep{rubin::1998, imbens2005robust, keele2015randomization}, with general intermediate variables \citep{nolen2011randomization}, and with interference \citep{rosenbaum2012interference, rigdon2015exact}. It is our ongoing work to extend the current approaches to these settings.

\section*{Appendix}

We first prove the theorems.  The propositions follow.
\begin{proof}
[Proof of Theorem \ref{thm::like-mono}]
Under monotonicity, the units with $(W_i,Y_i^\obs) = (1,1)$ are $(1,1)$ or $(1,0)$ units, the units with $(W_i, Y_i^\obs) = (1,0)$ are all $(0,0)$ units, the units with $(W_i, Y_i^\obs) = (0,1)$ are all $(1,1)$ units, and the units with $(W_i, Y_i^\obs)=(0,0)$ are $(0,0)$ or $(1,0)$ units.
Define $N_{bc,w}$ as the number of $(b,c)$ units within observed treatment group $w$. 
Then the observed data allows us to obtain
$$
\begin{array}{lll}
N_{11,1} = N_{11}-n_{01}^\obs, && N_{11,0} = n_{01}^\obs,\\
N_{00,1} = n_{10}^\obs, && N_{00,0} = N_{00}-n_{10}^\obs,\\
N_{10,1} =  n_{11}^\obs - N_{11,1} = n_{11}^\obs+n_{01}^\obs-N_{11},&&
N_{10,0} = N_{10}-N_{10,1}=N_{10}+N_{11}-n_{11}^\obs-n_{01}^\obs.
\end{array}
$$
The above shows that we know the number of each type of unit in both treatment arms, based on the observed counts and the totals $N_{bc}$.
Because all the counts are nonnegative integers, we have the following restriction on $(N_{10}, N_{11})$: 
$$
n_{01}^\obs \leq N_{11} \leq n_{11}^\obs + n_{01}^\obs \leq N_{10}+N_{11} \leq N - n_{10}^\obs.
$$
We can count that there are $ (n_{11}^\obs+1)(n_{00}^\obs+1)$ possible values for $(N_{10}, N_{11})$, and $ (n_{11}^\obs + n_{00}^\obs + 1)$ possible values for $\tau  .$

The completely randomized experiment corresponds to an urn model. 
We have an urn with $N_{11}$  $(1,1)$ balls, $N_{10}$ $(1,0)$ balls, and $N_{00}$ $(0,0)$ balls.
The experiment is that we randomly draw $N_1$ balls without replacement to form the treatment arm and use the remaining balls to form the control arm. We then observe the outcomes.
The above restrictions allows us to determine, based on observed data, the count vector for the three types of balls $(N_{11,1}, N_{10,1}, N_{00,1})$ that we have in the treatment arm, and similarly for control.
Therefore, the probability of obtaining $(N_{11,1}, N_{10,1}, N_{00,1})$ is a multivariate Hypergeometric distribution, given the values of $N_{11}$ and $N_{10}$.  Express this in terms of the observed data to obtain
$$
\binom{ N_{11} }{ N_{11,1}  }
\binom{ N_{10} }{ N_{10,1} }  
\binom{ N_{00} }{ N_{00,1}  }  \Big / \binom{N}{N_1}
=
\binom{ N_{11} }{  N_{11} - n_{01}^\obs }
\binom{N_{10}}{ n_{11}^\obs + n_{01}^\obs - N_{11} }  
\binom{ N - N_{10} - N_{11}   }{ n_{10}^\obs  }
 \Big / \binom{N}{N_1}.
$$
This is the likelihood, a function of $N_{11}$ and $N_{10}$, our parameters.
\end{proof}

\begin{proof}
[Proof of Theorem \ref{thm::like-non-mono}]
Without monotonicity, the observed data classified by $(W_i, Y_i^\obs)$ are mixtures: the observed group $(W_i, Y_i^\obs)=(1,1)$  contains $(1,1)$ and $(1,0)$ units, the observed group $(W_i, Y_i^\obs)=(1,0)$  contains $(0,1)$ and $(0,0)$ units, the observed group $(W_i, Y_i^\obs)=(0,1)$ contains $(1,1)$ and $(0,1)$ units, and the observed group $(W_i, Y_i^\obs)=(0,0)$  contains $(1,0)$ and $(0,0)$ units.
 Assume that $N_{11,1}=x$, we have
$$
\begin{array}{lll}
N_{11,1} = x, && N_{11,0} = N_{11}-x,\\
N_{10,1} = n_{11}^\obs -x, && N_{10,0} = N_{10}+x-n_{11}^\obs,\\
N_{01,1} = N_{01}+N_{11}-n_{01}^\obs-x, && N_{01,0}=n_{01}^\obs+x-N_{11},\\
N_{00,1} = n_{10}^\obs+n_{01}^\obs+x-N_{01}-N_{11}, && N_{00,0} =N-N_{10}-x-n_{10}^\obs-n_{01}^\obs.  
\end{array}
$$
As a byproduct, the attributable effect is $A = N_{10,1}  - N_{01,1}  = n_{11}^\obs  + n_{01}^\obs - N_{01}- N_{11}.$
The above counts must all be non-negative, implying the following inequality on $x$:
$$
\max( 0, n_{11}^\obs-N_{10}, N_{11}-n_{01}^\obs, N_{01}+N_{11}-n_{10}^\obs-n_{01}^\obs    )  
\leq x\leq  
\min( N_{11}, n_{11}^\obs, N_{01}+N_{11}-n_{01}^\obs, N-N_{10}-n_{10}^\obs-n_{01}^\obs ) .
$$
When $N_{01}=0$, the inequality collapses to $x=N_{11}-n_{01}^\obs$, which is coherent with Theorem \ref{thm::like-mono}. The above inequality also imposes the following restrictions on $(N_{10},N_{11})$ for a given value of $N_{01}$ and the observed data:
$$
\begin{array}{ll}
0\leq  N_{11}, &0\leq n_{11}^\obs, \\
0\leq  N_{01}+N_{11}-n_{01}^\obs,&0\leq N-N_{10}-n_{10}^\obs-n_{01}^\obs,\\
n_{11}^\obs-N_{10} \leq N_{11},& n_{11}^\obs-N_{10} \leq n_{11}^\obs,\\ 
n_{11}^\obs-N_{10} \leq N_{01}+N_{11}-n_{01}^\obs,& n_{11}^\obs-N_{10}\leq  N-N_{10}-n_{10}^\obs-n_{01}^\obs,\\
N_{11}-n_{01}^\obs \leq N_{11}, &        N_{11}-n_{01}^\obs \leq n_{11}^\obs, \\
N_{11}-n_{01}^\obs \leq  N_{01}+N_{11}-n_{01}^\obs,&         N_{11}-n_{01}^\obs \leq N-N_{10}-n_{10}^\obs-n_{01}^\obs,\\
N_{01}+N_{11}-n_{10}^\obs-n_{01}^\obs \leq N_{11},&         N_{01}+N_{11}-n_{10}^\obs-n_{01}^\obs \leq n_{11}^\obs, \\    
N_{01}+N_{11}-n_{10}^\obs-n_{01}^\obs \leq N_{01}+N_{11}-n_{01}^\obs,&         N_{01}+N_{11}-n_{10}^\obs-n_{01}^\obs\leq N-N_{10}-n_{10}^\obs-n_{01}^\obs .
\end{array}
$$
These inequalities can be simplied to be \eqref{eq::feasible}.
The inequality for $N_{01}$ is $N_{01}\leq n_{10}^\obs + n_{01}^\obs$, redundant over the sensitivity analysis region $N_{01}\leq N\widehat{p}_0(1-\widehat{p}_1)$, because $n_{10}^\obs + n_{01}^\obs\geq N\widehat{p}_0(1-\widehat{p}_1).$ 
\end{proof}

We next prove the propositions.  These proofs rely on the following lemma:

\begin{lemma}
\label{lemma::variance}
Assume $(c_1, \ldots, c_N)$ are constants with $\bar{c}=\sumN c_i/N$ and $S_c^2 = \sumN (c_i-\bar{c})^2/(N-1)$. Let $(W_1,\ldots, W_N)$ be the treatment indicators of a completely randomized experiment. We have that
$$
E\left(  \sumN W_ic_i  \right) = N_1 \bar{c},\quad
\var\left(  \sumN W_ic_i \right) = \frac{N_1N_0}{N}S_c^2. 
$$
\end{lemma}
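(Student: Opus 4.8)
The plan is to reduce everything to the first two moments of the treatment indicators, which are completely determined by the completely randomized design. First I would record that, by the symmetry of the design, each $W_i$ is marginally Bernoulli with $E(W_i) = N_1/N$, so that $\var(W_i) = (N_1/N)(N_0/N)$ (using $W_i^2 = W_i$ for a binary variable), and that for $i \neq j$ the joint assignment probability is $P(W_i = W_j = 1) = N_1(N_1-1)/\{N(N-1)\}$, which yields the negative covariance $\cov(W_i, W_j) = -N_1 N_0/\{N^2(N-1)\}$ characteristic of sampling without replacement. The mean statement is then immediate by linearity: $E(\sumN W_i c_i) = (N_1/N)\sumN c_i = N_1 \bar{c}$.

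For the variance, I would expand $\var(\sumN W_i c_i)$ into its diagonal and off-diagonal pieces. Since all the $\var(W_i)$ agree and all the $\cov(W_i, W_j)$ agree, this collapses to
$$
\var\left(\sumN W_i c_i\right) = \frac{N_1 N_0}{N^2}\sumN c_i^2 - \frac{N_1 N_0}{N^2(N-1)}\sum_{i\neq j} c_i c_j .
$$
I would then substitute the identity $\sum_{i \neq j} c_i c_j = (\sumN c_i)^2 - \sumN c_i^2$ and combine terms over the common factor $N_1 N_0/\{N^2(N-1)\}$, reducing the bracketed quantity to $N\sumN c_i^2 - (\sumN c_i)^2$. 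The final step is to recognize this as $N(N-1)S_c^2$, via the standard finite-population identity $\sumN(c_i - \bar{c})^2 = \sumN c_i^2 - N\bar{c}^2$, after which the surviving factors cancel to leave $N_1 N_0 S_c^2 / N$.

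The computation is entirely routine once the covariance is in hand, so the only real content — and the step I would be most careful about — is the off-diagonal moment $\cov(W_i, W_j)$: getting the sign and the $(N-1)$ in the denominator right is precisely what makes the finite-population correction appear in the end. As a sanity check and alternative, one could bypass the direct bookkeeping by observing that $\sumN W_i c_i$ is $N_1$ times the sample mean of a simple random sample of size $N_1$ drawn without replacement from the finite population $\{c_1, \ldots, c_N\}$, and then quote the classical simple-random-sampling formulas for the mean and variance of such a sample mean; I would nonetheless present the direct moment argument above to keep the lemma self-contained.
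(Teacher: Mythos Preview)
Your argument is correct and entirely standard: computing $E(W_i)$, $\var(W_i)$, and $\cov(W_i,W_j)$ under complete randomization and then expanding the variance is exactly the right way to obtain the result, and the algebra you outline goes through.

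The paper does not actually prove this lemma; it simply writes ``See classical survey sampling textbooks [e.g., Cochran 1977] for the proof.'' So your self-contained moment calculation is strictly more than what the paper provides. Your closing remark---that $\sumN W_i c_i$ is $N_1$ times the mean of a simple random sample without replacement, so one could just quote the standard SRS variance formula---is in fact precisely the route the paper takes by citation; your direct derivation is the content behind that citation.
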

See classical survey sampling textbooks \citep[e.g.,][]{cochran::1977} for the proof.

\begin{proof}
[Proof of Proposition \ref{thm::identi-mono}]
Verify that 
$$
E(n_{01}^\obs) = E\left\{  \sumN (1-W_i) Y_i(0)  \right\} = \frac{N_0}{N} N_{11},\quad 
E(n_{10}^\obs) = E\left[  \sumN W_i \{1-Y_i(1)\}  \right] = \frac{N_1}{N} N_{00}.
$$
The conclusion follows.
\end{proof}

\begin{proof}
[Proof of Proposition \ref{thm::moments-mono}]
Following \cite{neyman::1923} (presented using modern notation in \citet{imbens::2015book}), $\widehat{\tau}$ is unbiased for $\tau  $ with variance
\begin{eqnarray}
\label{eq::variance-neyman}
\var( \widehat{\tau}   ) = \frac{S_1^2}{N_1} + \frac{S_0^2}{N_0} - \frac{S_\tau^2}{N},
\end{eqnarray}
where 
\begin{eqnarray*}
S_1^2 &=& \frac{1}{N-1} \sumN \{ Y_i(1) - p_1 \}^2 = \frac{1}{N-1}  (Np_1^2-Np_1) = \frac{N}{N-1} p_1(1-p_1),\\
S_0^2 &=& \frac{1}{N-1} \sumN \{ Y_i(0) - p_0 \}^2 = \frac{1}{N-1}  (Np_0^2-Np_0) = \frac{N}{N-1} p_0(1-p_0),\\  
S_\tau^2&=&\frac{1}{N-1} \sumN (  \tau_i-\tau  )^2 = \frac{1}{N-1} \left(N_{10}  - \frac{N_{10}^2}{N}  \right)  = \frac{N}{N-1} \tau  (1-\tau  )
\end{eqnarray*}
are the finite population variance of $Y(1), Y(0)$, and $\tau.$ 
For estimating the variance, note that the variance term $S_\tau^2$ is identifiable because $N_{01} =0$ under monotonicity, and the conclusion follows.

The consistency and asymptotic normality of $\widehat{\tau}$ follows from the finite population central limit theorem \citep{hajek::1960}. And the variance estimator can be obtained by a simple plug-in. 
\end{proof}

\begin{proof}
[Proof of Proposition \ref{thm::identifiability-no-mono}]
From Lemma \ref{lemma::variance}, we have $E(\widehat{p}_1)=p_1$ and $E(\widehat{p}_0)=p_0$. Then  
\begin{eqnarray*}
E\left(  \frac{N}{N_0} n_{01}^\obs - N_{01} \right)&=& 
E\left(  N\widehat{p}_0 - N_{01} \right) = Np_0-N_{01}=(N_{01}+N_{11})-N_{01}=N_{11},\\
E\left(  \frac{N}{N_1} n_{10}^\obs - N_{01}   \right)&=&
E\{  N(1-\widehat{p}_1)  - N_{01} \}=N(1-p_1)-N_{01}=(N_{01}+N_{00})-N_{01}=N_{00},\\
E\left(  N+N_{01}-\frac{N}{N_0}n_{01}^\obs - \frac{N}{N_1}n_{10}^\obs    \right)&=&
N+N_{01}-Np_0-N(1-p_1)=N_{10}.
\end{eqnarray*}
\end{proof}

\begin{proof}
[Proof of Proposition \ref{thm::bounds-sensitivity}]
As a byproduct of the derivations in the proof of Proposition \ref{thm::identifiability-no-mono}, we have
\begin{eqnarray*}
Np_0-N_{01} \geq 0,\quad 
N(1-p_1)-N_{01} \geq 0,\quad 
N+N_{01}-Np_0-N(1-p_1)\geq 0,
\end{eqnarray*}
which further implies 
$
\max(0, - N\tau  ) \leq N_{01} \leq \min\{   Np_0, N(1-p_1)    \} .
$

Yule's measure of the correlation is $N_{11}N_{00} - N_{10}N_{11}$, which is the rescaled covariance of potential outcomes.  
If this is non-negative, the correlation of potential outcomes is non-negative.
We also have that $N_{00} = N(1-p_1) - N_{01}$, $N_{11} = Np_0 - N_{01}$, and $N_{10} = N+N_{01}-N(1-p_1) - Np_0,$ giving
\begin{eqnarray*}
0&\leq& N_{11}N_{00}-N_{10}N_{01} 
=  (Np_0 - N_{01}) \{ N(1-p_1) - N_{01} \} -  \{ N+N_{01}-N(1-p_1) - Np_0 \} N_{01} \\
&=&  N^2p_0(1-p_1) - NN_{01}, 
\end{eqnarray*}
or, equivalently, $N_{01} \leq Np_0(1-p_1).$
\end{proof}

\begin{proof}
[Proof of Proposition \ref{thm::variance-bounds-non-mono}]
According to the variance formula of $\widehat{\tau}  $ in (\ref{eq::variance-neyman}), we need to calculate $S_\tau^2/N$ with a known $N_{01}$. We have
\begin{eqnarray*}
\frac{S_\tau^2}{N} &=& \frac{1}{(N-1)N} \left(  \sumN \tau_i^2 - N\tau^2  \right) 
= \frac{1}{N-1} \left\{   \frac{N_{10}+N_{01}}{N}  -  \left(   \frac{ N_{10} - N_{01} }{N}  \right)^2   \right\}  
= \frac{1}{N-1} \left(\tau - \tau^2 + \frac{2N_{01} }{ N} \right ),
\end{eqnarray*}
and its bounds follows directly from $0\leq N_{01} \leq Np_0(1-p_1)$.
\end{proof}

\begin{proof}
[Proof of Proposition \ref{thm::A-neyman}]
We have $E(A) =  N_1\tau   = E(N_1\widehat{\tau}  )$, and 
\begin{eqnarray*}
\var(A - N_1\widehat{\tau}   )&=& \var\left[       \sumN W_i \{  Y_i(1)-Y_i(0)  \} - \sumN W_i Y_i(1) + \frac{N_1}{N_0}\sumN (1-W_i) Y_i(0)    \right]\\
&=&\var\left[      \sumN W_i  \left\{    Y_i(1)-Y_i(0) - Y_i(1) -\frac{N_1}{N_0} Y_i(0)     \right\}     \right]\\
&=& \var\left\{   \sumN  W_i Y_i(0) \cdot \frac{N}{N_0}    \right\}
= \frac{N^2}{N_0^2} \cdot \frac{N_1 N_0}{  N(N-1) } \cdot \sumN \{ Y_i(0) - \bar{Y}(0) \}^2 \\
&=&  \frac{N N_1 }{N_0}  S_0^2 =  \frac{N^2 N_1}{N_0(N-1)} p_0 (1-p_0),
\end{eqnarray*}
where the penultimate line of the proof is due to Lemma \ref{lemma::variance}.
\end{proof}

\section*{Acknowledgements}
The authors thank Dr. Avi Feller at Berkeley for helpful comments.

%

\bibliographystyle{biometrika}
\bibliography{ACE}

%

\end{document}